\documentclass[
  twocolumn,
  aps,
  pra,
  showkeys,
  amsmath,amssymb,
  longbibliography
]{revtex4-2}

\usepackage[T1]{fontenc}
\usepackage[utf8]{inputenc}
\usepackage{mathptmx}
\usepackage{amsthm}
\usepackage{graphicx}
\usepackage{booktabs}
\usepackage{xcolor}
\usepackage{array}
\usepackage{tikz}
\usetikzlibrary{calc,matrix,positioning}
\usepackage{alltt}
\usepackage[
  breaklinks=true,
  colorlinks=true,
  citecolor=blue,
  linkcolor=blue,
  urlcolor=blue
]{hyperref}
\usepackage{orcidlink}

\newtheorem{proposition}{Proposition}

\definecolor{stateone}{HTML}{3A9D5D}   
\definecolor{statetwo}{HTML}{3C74C6}   
\definecolor{statethree}{HTML}{C84332} 
\definecolor{statefour}{HTML}{D8922C}  
\definecolor{statefive}{HTML}{7C52B8}  
\definecolor{lightfalse}{HTML}{E6E6E6}

\tikzset{
  qtile/.style={draw=black, minimum width=4.2mm, minimum height=4.2mm, inner sep=0pt},
  tileone/.style={qtile, fill=stateone},
  tiletwo/.style={qtile, fill=statetwo},
  tilethree/.style={qtile, fill=statethree},
  tilefour/.style={qtile, fill=statefour},
  tilefive/.style={qtile, fill=statefive},
  tilefalse/.style={qtile, fill=lightfalse, draw=gray!70},
  tilesep/.style={qtile, fill=black},
  tileblank/.style={qtile, fill=white, draw=gray!30}
}

\newcommand{\CodeSq}[1]{%
  {%
    \setlength{\fboxsep}{0pt}%
    \fcolorbox{black}{#1}{\rule{0pt}{1.5ex}\rule{1.5ex}{0pt}}%
  }%
}
\newcommand{\GreenSq}{\CodeSq{stateone}}
\newcommand{\BlueSq}{\CodeSq{statetwo}}
\newcommand{\RedSq}{\CodeSq{statethree}}
\newcommand{\OrangeSq}{\CodeSq{statefour}}
\newcommand{\VioletSq}{\CodeSq{statefive}}
\newcommand{\BlackSq}{\CodeSq{black}}

\newcounter{listing}
\renewcommand{\thelisting}{\arabic{listing}}

\newcommand{\listingtitle}[2]{%
  \refstepcounter{listing}%
  \par\smallskip
  \noindent\textbf{Listing \thelisting.} #2%
  \label{#1}%
  \par\smallskip
}

\newenvironment{codeblock}
  {\par\smallskip\noindent\hrule\smallskip\begingroup\footnotesize\ttfamily\begin{alltt}}
  {\end{alltt}\endgroup\smallskip\hrule\smallskip}

\begin{document}

\title{Quantum Structures as Generative Scores:\\
Partition Logic, Generative Logic, and Aesthetic Form}

\author{Christian Jendreiko\,\orcidlink{0009-0006-4302-6261}}
\affiliation{HSD University of Applied Sciences, M\"unsterstra{\ss}e 156, D-40476 D\"usseldorf, Germany}
\email{christian.jendreiko@hs-duesseldorf.de}
\author{Karl Svozil\,\orcidlink{0000-0001-6554-2802}}
\affiliation{Institute for Theoretical Physics, TU Wien, Wiedner Hauptstrasse 8-10/136, A-1040 Vienna, Austria}
\email{karl.svozil@tuwien.ac.at}

\date{\today}

\begin{abstract}
We connect partition logic with Generative Logic by translating finite partition logics into Prolog-based Simple Generative Logic Grammars. As a proof of concept, we use the five-atom V-logic \(L_{12}\) to generate a modular visual artifact, the \emph{Quantum Square}. The approach separates logical structure from its visual, textual, or sonic realization. This makes partition logic useful both as a generative design resource and as a tool for communicating complementarity.
\end{abstract}

\keywords{partition logics, logic programming, quantum structures, quantum music, generative logic, complementarity, generative art, generative design, science communication, prolog, education}

\maketitle

\section{Introduction}

This paper connects two lines of work that are rarely discussed together. One originates in the logical and combinatorial study of finite empirical structures in quantum foundations, especially partition logics and related event structures \cite{svozil-93,schaller-92,dvur-pul-svo,svozil-ql,svozil-2001-eua}. The other originates in artistic research and design education, namely Christian Jendreiko's concept of \emph{Generative Logic} and the associated \emph{Simple Generative Logic Grammar} (SGLG), which treats Prolog as a generative engine for aesthetic production \cite{Jendreiko2024,Jendreiko2025}.

The shared premise is that formal structures are not only objects of abstract analysis; they can also act as structuring agents. In mathematics, this idea is familiar from fractals, cellular automata, L-systems, symmetry groups, and stochastic processes. In design, one encounters related practices whenever a rule system is used to generate, constrain, or organize visual, textual, or sonic material. The present paper asks whether partition logics can be used in this way: not merely as representations of finite event structures, but as engines for the generation of perceptible artifacts.

This question has both an aesthetic and a communicative side. Finite logical structures exhibit repetition, asymmetry, recurrence, and nontrivial forms of overlap. These can become aesthetically productive once mapped into color, shape, spacing, rhythm, or timbre. At the same time, concepts such as context, complementarity, and non-Boolean composition are often difficult to grasp when they remain purely symbolic. If they can be externalized into visual or sonic form without losing their structural discipline, they may become more accessible.

The key methodological move of the paper is the separation of three layers:
\begin{enumerate}
\item a \emph{formal skeleton}, given by a partition logic or related finite event structure;
\item a \emph{generative grammar}, which translates that skeleton into a structured template of positions;
\item a \emph{sign repertoire and rendering rule}, which realizes that structure in a concrete medium.
\end{enumerate}
This separation is important both artistically and scientifically. Artistically, it allows the same structure to migrate across media. Scientifically, it prevents us from conflating a combinatorial structure with a particular physical realization or probability theory.

The paper does four things. First, it gives a compact account of the partition-logical source structures used here. Second, it states a general translation from finite partition logics with separating sets of two-valued states into SGLGs. Third, it develops a proof of concept: the V-logic \(L_{12}\) is translated into a visual artifact, the \emph{Quantum Square}. Fourth, it extends the method to a triangle logic and discusses the resulting structures from the perspectives of design and science communication.

\section{Partition logic between formal structure and design material}

\subsection{Partitions and partition logics}

Let \(\Omega_n=\{1,2,\dots,n\}\) be a finite set. A \emph{partition} \(\pi\) of \(\Omega_n\) is a family of nonempty, pairwise disjoint subsets whose union is \(\Omega_n\). The elements of a partition are its \emph{blocks}. Each partition can be identified with the atoms of a finite Boolean algebra. A \emph{partition logic} is obtained by selecting several such Boolean algebras and pasting them together through identified common elements \cite{schaller-92,dvur-pul-svo,svozil-2001-eua}.

In the language of quantum logic, the pasted Boolean algebras are often called \emph{contexts}. They are local classical views inside a larger non-Boolean structure. Whenever more than one context is present, one obtains a form of complementarity: different decompositions of the same underlying set coexist, but they cannot be reduced to a single global Boolean block unless the structure is trivial.

Partition logics are therefore well suited to interdisciplinary transfer. They are finite, explicit, and combinatorial. At the same time, they already carry key motifs associated with quantum theory: contextual decomposition, overlap of contexts, and non-Boolean composition.

\subsection{Formal Example A: a horizontal sum of binary partitions}

A minimal example uses the set
\begin{equation}
\Omega_3=\{1,2,3\}.
\end{equation}
Consider the three binary partitions
\begin{align}
\pi_1 &= \{\{1\},\{2,3\}\},
\nonumber\\
\pi_2 &= \{\{2\},\{1,3\}\},
\nonumber\\
\pi_3 &= \{\{3\},\{1,2\}\}.
\label{eq:threepartitions}
\end{align}
If these three Boolean algebras are pasted only in their bottom and top elements, one obtains a horizontal sum of three four-element Boolean algebras. This is a simple non-Boolean event structure.

To make the example more readable, let us name the six atoms
\begin{align}
p &\leftrightarrow \{1\}, & \neg p &\leftrightarrow \{2,3\},
\nonumber\\
q &\leftrightarrow \{2\}, & \neg q &\leftrightarrow \{1,3\},
\nonumber\\
r &\leftrightarrow \{3\}, & \neg r &\leftrightarrow \{1,2\}.
\end{align}
The underlying points \(1,2,3\) induce three two-valued states \(s_1,s_2,s_3\). The supports are then
\begin{align}
T(p)&=\{s_1\}, & T(\neg p)&=\{s_2,s_3\},
\nonumber\\
T(q)&=\{s_2\}, & T(\neg q)&=\{s_1,s_3\},
\nonumber\\
T(r)&=\{s_3\}, & T(\neg r)&=\{s_1,s_2\}.
\label{eq:hsupports}
\end{align}
Even this small logic shows a pattern that later becomes generative: the same finite set of state symbols is redistributed from context to context.

\subsection{Formal Example B: the V-logic \(L_{12}\)}

Our main proof of concept uses the five-atom V-logic \(L_{12}\), consisting of two three-atomic contexts
\begin{equation}
C_1=\{a,b,c\},
\qquad
C_2=\{c,d,e\},
\end{equation}
with intertwining atom \(c\). Figure~\ref{fig:vlogic}(a) shows the corresponding 3-uniform hypergraph in Greechie-style notation \cite{greechie:71}, in which contexts are represented schematically by unbroken lines that intersect at shared atoms.

This logic has exactly five two-valued states, shown in Table~\ref{tab:vstates}. They separate the atoms and therefore provide a classical state space for a partition-logical representation \cite[Theorem~0]{kochen1}.

\begin{table}[b]
\caption{The five two-valued states of the V-logic \(L_{12}\).}
\label{tab:vstates}
\begin{ruledtabular}
\begin{tabular}{c c c c c c}
state & \(a\) & \(b\) & \(c\) & \(d\) & \(e\) \\
\hline
\(s_1\) & 1 & 0 & 0 & 0 & 1 \\
\(s_2\) & 1 & 0 & 0 & 1 & 0 \\
\(s_3\) & 0 & 1 & 0 & 0 & 1 \\
\(s_4\) & 0 & 1 & 0 & 1 & 0 \\
\(s_5\) & 0 & 0 & 1 & 0 & 0 \\
\end{tabular}
\end{ruledtabular}
\end{table}

The supports of the atoms are
\begin{align}
T(a)&=\{s_1,s_2\}, & T(b)&=\{s_3,s_4\}, & T(c)&=\{s_5\},
\nonumber\\
T(d)&=\{s_2,s_4\}, & T(e)&=\{s_1,s_3\}.
\label{eq:vlogic_supports}
\end{align}
Equivalently, by identifying each atom with the set of two-valued states in which it is assigned the value \(1\) \cite{svozil-2001-eua}, the two contexts are represented by the partitions
\begin{equation}
\begin{split}
&\{\{s_1,s_2\},\{s_3,s_4\},\{s_5\}\},
\;
\{\{s_5\},\{s_2,s_4\},\{s_1,s_3\}\};
\\
&\text{or, equivalently,}
\\
&\{\{1,2\},\{3,4\},\{5\}\},
\;
\{\{5\},\{2,4\},\{1,3\}\}.
\end{split}
\end{equation}

The conceptual importance of this example, as well as the previous example involving horizontal sums, is that it exhibits complementarity while still possessing a separating set of two-valued states. Thus it does \emph{not} yet realize full Kochen-Specker-type value indefiniteness \cite{kochen1}. For science communication this is advantageous: it allows one to explain context dependence without prematurely introducing stronger no-go results.

\subsection{A quantum realization of the same logical form}

The same V-logic also admits a faithful orthogonal representation \cite{lovasz-79,GroetschelLovaszSchrijver1986} in three-dimensional Hilbert space. A simple choice is
\begin{equation}
\begin{split}
&|a\rangle = (1,0,0)^{\mathsf T},
\quad
|b\rangle = (0,1,0)^{\mathsf T},
\\
&|c\rangle = (0,0,1)^{\mathsf T},
\\
&|d\rangle = (\cos\theta,\sin\theta,0)^{\mathsf T},
\quad
|e\rangle = (-\sin\theta,\cos\theta,0)^{\mathsf T},
\label{eq:hilbertrealization}
\end{split}
\end{equation}
with \(0<\theta<\pi/2\).
Then \(\{|a\rangle,|b\rangle,|c\rangle\}\) and \(\{|c\rangle,|d\rangle,|e\rangle\}\) form two orthonormal bases sharing the ray \(c\).

This dual representability is central to our later discussion. The same finite logical skeleton can be realized as a partition logic with classical probabilities or as an orthogonal vector configuration with Born-type probabilities. Generatively, this means that one may keep the formal skeleton fixed while changing the weighting regime.

\section{Generative Logic and SGLGs}

\subsection{Generative Logic}

Generative Logic, introduced in Refs.~\cite{Jendreiko2024,Jendreiko2025}, treats Prolog not only as a language for symbolic computation but as a generative environment in which rules, substitutions, and derivations become design operations. The Prolog inference engine functions as a disciplined producer of variants. In this perspective, the logical formula is not only a specification of truth conditions; it is also a compositional device.

This is particularly relevant in artistic research and education. It allows students and researchers to work with rule systems that remain both transparent and executable. The output need not remain verbal or symbolic. Once a derivation is mapped into color, sound, or spatial arrangement, a logical process becomes a perceptible artifact.

\subsection{A Simple Generative Logic Grammar}

To turn a partition logic into an executable generative scheme, we use a Simple Generative Logic Grammar (SGLG) \cite{Jendreiko2025}. In the present paper, this grammar is understood in a practical way: it specifies how a finite artifact is assembled from rows and symbols, while a separate rendering map determines how those symbols appear in a given medium.

We write the grammar as
\begin{equation}
G=(V,\Sigma,P,S,M,L),
\end{equation}
where \(V\) is a set of nonterminal symbols, \(\Sigma\) is a set of symbols to be rendered, \(P\) is a set of production rules, \(S\) is the start symbol, \(M\) is a rendering map assigning concrete realizations to symbols, and \(L\) is a set of layout symbols such as separators or line breaks. Nonterminal symbols can be rewritten as sequences of nonterminal and/or terminal symbols.

For the present purposes, a production rule of the form
\begin{equation}
\texttt{Head --> Body}
\end{equation}
may be read simply as an assembly rule: the head is expanded into the ordered sequence given by the body. For example, if the complete artifact consists of five rows, one may write
\begin{equation}
q \to a\,b\,c\,d\,e
\end{equation}
meaning that the whole object is built from the rows \(a,b,c,d,e\). A row may then be specified by a rule such as
\begin{equation}
a \to s_1\,s_2\,br\,s_3\,s_4\,s_5\,n
\end{equation}
where \(s_1,\dots,s_5\) are state symbols, \(br\) is a separator, and \(n\) denotes a line break.

The grammar therefore determines the structure of the output, but not yet its concrete appearance. The rendering map \(M\) assigns a medium-specific realization to each symbol. For instance, one may choose
\begin{equation}
s_1 \mapsto \text{green square},
\qquad
s_2 \mapsto \text{blue square}
\end{equation}
and similarly for the other symbols. By changing \(M\), the same grammar can be realized visually, textually, or sonically.

In Jendreiko's broader framework, such rules may also be interpreted as containment relations, in that larger forms contain the smaller ones they are composed of.

The non-recursive character of the grammar is important here. It guarantees that every derivation terminates after finitely many steps and yields a finite, inspectable artifact.

\section{From partition logic to grammar}

The preceding grammar scheme becomes concrete once its symbols are taken from a partition logic. The basic idea is simple: atoms of the logic become row labels, two-valued states become recurring symbols, and a separator divides the states that make a given atom true from those that make it false.

Let \(L\) be a finite partition logic with atoms
\begin{equation}
A=\{x_1,\dots,x_m\}
\end{equation}
and a separating set of two-valued states
\begin{equation}
V_L=\{v_1,\dots,v_N\}.
\end{equation}
Associate a symbol \(s_i\) with each two-valued state \(v_i\), and write
\begin{equation}
\mathcal{S}_N=\{s_1,\dots,s_N\}.
\end{equation}
For each atom \(x_j\), define
\begin{equation}
T(x_j)=\{s_i\in\mathcal{S}_N \mid v_i(x_j)=1\},
\qquad
F(x_j)=\mathcal{S}_N\setminus T(x_j).
\label{eq:tfsets}
\end{equation}
Thus \(T(x_j)\) is the set of state symbols for which \(x_j\) is true, and \(F(x_j)\) is the complementary set for which \(x_j\) is false.

We then construct a grammar \(G_L\) with start symbol \(q\) by setting
\begin{align}
q &\to x_1\,x_2\,\cdots\,x_m,
\label{eq:startrule2}
\\
x_j &\to T(x_j)\,br\,F(x_j)\,n.
\label{eq:rowrule2}
\end{align}
Here \(br\) is a separator symbol and \(n\) is a layout symbol such as a line break. In words: the whole artifact consists of one row for each atom, and the row of \(x_j\) lists first those states in which \(x_j\) is true and then those in which it is false.

\begin{proposition}
Let \(L\) be a finite partition logic with separating two-valued states. The grammar \(G_L\) constructed from Eqs.~\eqref{eq:tfsets}--\eqref{eq:rowrule2} preserves the incidence relation between atoms and two-valued states. In the generated row for \(x_j\), the symbol \(s_i\) occurs to the left of the separator iff \(v_i(x_j)=1\), and to the right iff \(v_i(x_j)=0\).
\end{proposition}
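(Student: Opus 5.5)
The argument is a direct unfolding of the definitions: trace the derivation of $G_L$ and compare the resulting row with the definition of $T(x_j)$ and $F(x_j)$ in Eq.~\eqref{eq:tfsets}. First I will note that $G_L$ is non-recursive. The start rule \eqref{eq:startrule2} rewrites $q$ into the row nonterminals $x_1,\dots,x_m$. Each row rule \eqref{eq:rowrule2} rewrites $x_j$ into terminal symbols only, namely state symbols, $br$ and $n$. So every derivation from $q$ terminates, and the output is the concatenation of exactly one generated row per atom, each closed by $n$. Because of the terminating $n$, the row belonging to $x_j$ is unambiguously identifiable in the output.

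Second, I will fix what the notation $T(x_j)\,br\,F(x_j)\,n$ means as a string. It is some enumeration (say, in increasing index order) of the elements of $T(x_j)$, followed by one occurrence of $br$, followed by an enumeration of $F(x_j)$, and then $n$. The only symbols in $\mathcal{S}_N$ are state symbols, and $br,n\notin\mathcal{S}_N$. Hence this row contains exactly one separator, and every state symbol occurs either before or after it.

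Third, I will prove the biconditionals. By Eq.~\eqref{eq:tfsets}, $T(x_j)$ and $F(x_j)$ are complementary in $\mathcal{S}_N$. So each $s_i$ occurs exactly once in the row of $x_j$: left of $br$ iff $s_i\in T(x_j)$ iff $v_i(x_j)=1$, and right of $br$ iff $s_i\in F(x_j)$, i.e.\ $v_i(x_j)\neq 1$. Since $v_i$ is two-valued, $v_i(x_j)\neq1$ iff $v_i(x_j)=0$. This is exactly the claimed incidence preservation. Conversely, the full incidence relation $\{(i,j): v_i(x_j)=1\}$ can be read off the artifact, so nothing is lost.

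The only point needing care, rather than a real obstacle, is the well-definedness of writing a set as a sequence. I must check that listing a set introduces no duplicates and that the separator is not itself a state symbol, so that "left of" and "right of" are determined. I will also remark that the separating property of $V_L$ is not needed for the incidence statement itself. It only guarantees that distinct atoms yield distinct rows, i.e.\ $T(x_j)\neq T(x_k)$ for $j\neq k$, which makes the encoding faithful at the level of atoms.
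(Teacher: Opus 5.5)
Your proposal is correct and follows the same route as the paper's proof: $s_i\in T(x_j)$ iff $v_i(x_j)=1$ and $s_i\in F(x_j)$ iff $v_i(x_j)=0$, the row rule places $T(x_j)$ before the single separator and $F(x_j)$ after it, and the start rule concatenates all rows. Your additional checks are correct refinements that the paper leaves implicit: that $br,n\notin\mathcal{S}_N$, that listing a set introduces no duplicates, and that the separating property is needed only to make distinct atoms yield distinct rows, not for the incidence claim itself.
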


\begin{proof}
By definition, \(T(x_j)\) contains exactly those \(s_i\) for which \(v_i(x_j)=1\), and \(F(x_j)\) contains exactly those \(s_i\) for which \(v_i(x_j)=0\). The rule for \(x_j\) concatenates \(T(x_j)\), then the separator, then \(F(x_j)\). Hence the position of \(s_i\) relative to the separator is equivalent to the truth value \(v_i(x_j)\). Since the start rule concatenates all rows, the complete derivation preserves the full atom-state incidence relation.
\end{proof}

This translation captures one specific structural feature of the logic: the pattern of support of its atoms across the two-valued states. It does not yet determine the medium in which the result will appear. That choice enters only through the rendering map \(M\), which may assign colors, letters, geometric primitives, or sound events to the state symbols.

The construction is easiest to understand through examples.

\subsection{Formal Example C: the horizontal sum as a grammar}

Applying the translation to the horizontal-sum example of Eq.~\eqref{eq:threepartitions} yields
\begin{align}
p &\to s_1\,br\,s_2\,s_3\,n,
&
\neg p &\to s_2\,s_3\,br\,s_1\,n,
\nonumber\\
q &\to s_2\,br\,s_1\,s_3\,n,
&
\neg q &\to s_1\,s_3\,br\,s_2\,n,
\nonumber\\
r &\to s_3\,br\,s_1\,s_2\,n,
&
\neg r &\to s_1\,s_2\,br\,s_3\,n.
\label{eq:horizontalgrammar}
\end{align}
Each row corresponds to one atom, and the recurring symbols \(s_1,s_2,s_3\) record which of the three two-valued states support that atom. Even in this small example, the grammar already produces a visible pattern of repetition and redistribution.

\subsection{Formal Example D: the V-logic as a grammar}

The V-logic \(L_{12}\) is handled in exactly the same way. Since it has five atoms and five two-valued states, the resulting artifact has five rows built from the recurring symbols \(s_1,\dots,s_5\). Using the supports listed in Eq.~\eqref{eq:vlogic_supports}, we obtain
\begin{align}
a &\to s_1\,s_2\,br\,s_3\,s_4\,s_5\,n,
\nonumber\\
b &\to s_3\,s_4\,br\,s_1\,s_2\,s_5\,n,
\nonumber\\
c &\to s_5\,br\,s_1\,s_2\,s_3\,s_4\,n,
\nonumber\\
d &\to s_2\,s_4\,br\,s_1\,s_3\,s_5\,n,
\nonumber\\
e &\to s_1\,s_3\,br\,s_2\,s_4\,s_5.
\label{eq:vgrammar}
\end{align}
This is the structural core of the \emph{Quantum Square}. The grammar fixes the arrangement of symbols row by row, beginning with the start rule as the point of entry for the generation:
\begin{equation}
q \to a\,b\,c\,d\,e.
\end{equation}

To obtain a visual realization, we map the five state symbols to five squares of identical size with five colors:
\begin{align}
s_1 &\mapsto \text{green},
&
s_2 &\mapsto \text{blue},
&
s_3 &\mapsto \text{red},
\nonumber\\
s_4 &\mapsto \text{orange},
&
s_5 &\mapsto \text{violet}.
\label{eq:vpalette}
\end{align}
The separator \(br\) is rendered as a black square. The third layer is a simple renderer.

The important point is that the logical structure and the perceptible realization remain distinct. The V-logic determines the combinatorial arrangement, whereas the choice of colored squares belongs to the repertoire layer. The same structural rows can be rendered with different palettes or in different media, such as color, typography, or sound. This separation is what allows the same partition-logical source to function both as a formal object and as a generative score.

In practical applications, the colored squares can be replaced by HTML fragments, SVG instructions, MIDI events, OSC messages, or calls to a graphics or sound engine. Thus the same logical skeleton can generate not only tile images but also textual works or sonic pieces.

\section{The Quantum Square: hypergraph, schema, and output}

Figure~\ref{fig:vlogic} presents three views of the same source structure: panel~(a) shows the underlying V-logic as a Greechie-style hypergraph, panel~(b) shows the corresponding atom-state incidence schema, and panel~(c) shows the grammar-generated output we call the \emph{Quantum Square}.

\begin{figure*}[t]
\centering
\setlength{\tabcolsep}{3pt}
\begin{tabular}{ccc}
\begin{tikzpicture}[scale=0.90, transform shape, every path/.style={line width=1pt}]

\draw[orange] (0,2)--(0.9,1)--(1.8,0);
\draw[blue]   (1.8,0)--(2.7,1)--(3.6,2);

\filldraw[fill=orange, draw=orange] (0,2) circle (3pt);
\node[yshift=-10pt] at (0,2) {$a$};

\filldraw[fill=orange, draw=orange] (0.9,1) circle (3pt);
\node[yshift=-10pt] at (0.9,1) {$b$};

\filldraw[fill=orange, draw=orange] (1.8,0) circle (4pt);
\filldraw[fill=blue,   draw=blue]   (1.8,0) circle (2pt);
\node[yshift=-10pt] at (1.8,0) {$c$};

\filldraw[fill=blue, draw=blue] (2.7,1) circle (3pt);
\node[yshift=-10pt] at (2.7,1) {$d$};

\filldraw[fill=blue, draw=blue] (3.6,2) circle (3pt);
\node[yshift=-10pt] at (3.6,2) {$e$};

\end{tikzpicture}
&
\begin{tikzpicture}[scale=0.88, transform shape]
\matrix (n) [matrix of nodes,
  nodes in empty cells,
  row sep=1pt,
  column sep=1pt,
  ampersand replacement=\&] {
|[tileone]| \& |[tiletwo]| \& |[tilefalse]| \& |[tilefalse]| \& |[tilefalse]| \\
|[tilefalse]| \& |[tilefalse]| \& |[tilethree]| \& |[tilefour]| \& |[tilefalse]| \\
|[tilefalse]| \& |[tilefalse]| \& |[tilefalse]| \& |[tilefalse]| \& |[tilefive]| \\
|[tilefalse]| \& |[tiletwo]| \& |[tilefalse]| \& |[tilefour]| \& |[tilefalse]| \\
|[tileone]| \& |[tilefalse]| \& |[tilethree]| \& |[tilefalse]| \& |[tilefalse]| \\
};
\node[left=4mm of n-1-1] {\small \(a\)};
\node[left=4mm of n-2-1] {\small \(b\)};
\node[left=4mm of n-3-1] {\small \(c\)};
\node[left=4mm of n-4-1] {\small \(d\)};
\node[left=4mm of n-5-1] {\small \(e\)};

\node[above=3mm of n-1-1] {\small \(s_1\)};
\node[above=3mm of n-1-2] {\small \(s_2\)};
\node[above=3mm of n-1-3] {\small \(s_3\)};
\node[above=3mm of n-1-4] {\small \(s_4\)};
\node[above=3mm of n-1-5] {\small \(s_5\)};
\end{tikzpicture}
&
\begin{tikzpicture}[scale=0.88, transform shape]
\matrix (m) [matrix of nodes,
  nodes in empty cells,
  row sep=1pt,
  column sep=1pt,
  ampersand replacement=\&] {
|[tileone]| \& |[tiletwo]| \& |[tilesep]| \& |[tilethree]| \& |[tilefour]| \& |[tilefive]| \\
|[tilethree]| \& |[tilefour]| \& |[tilesep]| \& |[tileone]| \& |[tiletwo]| \& |[tilefive]| \\
|[tilefive]| \& |[tilesep]| \& |[tileone]| \& |[tiletwo]| \& |[tilethree]| \& |[tilefour]| \\
|[tiletwo]| \& |[tilefour]| \& |[tilesep]| \& |[tileone]| \& |[tilethree]| \& |[tilefive]| \\
|[tileone]| \& |[tilethree]| \& |[tilesep]| \& |[tiletwo]| \& |[tilefour]| \& |[tilefive]| \\
};
\node[left=4mm of m-1-1] {\small \(a\)};
\node[left=4mm of m-2-1] {\small \(b\)};
\node[left=4mm of m-3-1] {\small \(c\)};
\node[left=4mm of m-4-1] {\small \(d\)};
\node[left=4mm of m-5-1] {\small \(e\)};
\end{tikzpicture}
\\
(a) & (b) & (c)
\end{tabular}
\caption{Three representations of the V-logic \(L_{12}\). (a) Greechie-style hypergraph. (b) Atom-state incidence schema, with colored cells indicating \(v_i(x)=1\) and gray cells indicating \(v_i(x)=0\). (c) The grammar-generated output, the \emph{Quantum Square}. In each row, the states assigning value \(1\) to the atom are listed first, then a black separator, and then the states assigning value \(0\).}
\label{fig:vlogic}
\end{figure*}
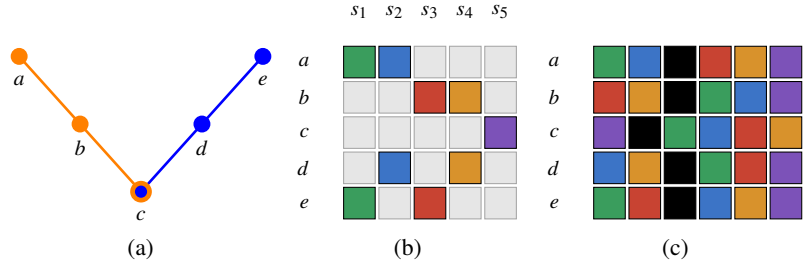

Panel~(a) shows the logical source in graphical form, panel~(b) translates the same source into a compact incidence schema, and panel~(c) realizes the same combinatorial information as an aesthetic object. Together, the three views make clear that the generated image is not arbitrary: it is a structured rendering of a well-defined finite logic.

\subsection{Why this object is aesthetically interesting}

The \emph{Quantum Square} is not a picture of a quantum system in any naive representational sense. It does not depict particles, trajectories, or apparatuses. Rather, it externalizes a formal relation as a visible organization of repetition and difference.

What becomes aesthetically salient is the disciplined reappearance of the same five colors across distinct rows, each row embodying a different logical perspective on the same finite state space. In design terms, this produces a tension between consistency and redistribution. In humanities terms, one may say that the logic functions as a score: it does not determine one unique expressive surface, but it constrains a field of possible realizations. In logical terms, it preserves the incidence of atoms and states.

\section{A triangle-logic generalization of the Quantum Square}

To show that the method is not specific to the V-logic, we now apply the same procedure to a triangle logic represented by the three partitions
\begin{equation}
\Pi_{\triangle}=
\Bigl\{
\{\{1\},\{2,3\},\{4\}\},
\{\{4\},\{1,2\},\{3\}\},
\{\{3\},\{2,4\},\{1\}\}
\Bigr\}.
\label{eq:trianglepartitions}
\end{equation}
This structure is again a partition logic, but now with three contexts arranged cyclically rather than two contexts joined in a V-shape.

To make the logic easier to read, let us name its six atoms by
\begin{align}
a &\leftrightarrow \{1\},
&
b &\leftrightarrow \{2,3\},
&
c &\leftrightarrow \{4\},
\nonumber\\
d &\leftrightarrow \{1,2\},
&
e &\leftrightarrow \{3\},
&
f &\leftrightarrow \{2,4\}.
\label{eq:triangleatoms}
\end{align}
The three contexts are then
\begin{equation}
C_1=\{a,b,c\},
\qquad
C_2=\{c,d,e\},
\qquad
C_3=\{e,f,a\}.
\end{equation}
Thus \(a\), \(c\), and \(e\) are the intertwining atoms: each belongs to two contexts, while \(b\), \(d\), and \(f\) belong to only one.

This logic has exactly four two-valued states, shown in Table~\ref{tab:trianglestates}. As in the V-logic case, these states separate the atoms and therefore provide a classical state space for a partition-logical representation. Writing these states as \(s_1,s_2,s_3,s_4\), the supports of the atoms are read off directly from Eq.~\eqref{eq:triangleatoms}:

\begin{table}[t]
\caption{The four two-valued states of the triangle logic.}
\label{tab:trianglestates}
\begin{ruledtabular}
\begin{tabular}{c c c c c c c}
state & \(a\) & \(b\) & \(c\) & \(d\) & \(e\) & \(f\) \\
\hline
\(s_1\) & 1 & 0 & 0 & 1 & 0 & 0 \\
\(s_2\) & 0 & 1 & 0 & 1 & 0 & 1 \\
\(s_3\) & 0 & 1 & 0 & 0 & 1 & 0 \\
\(s_4\) & 0 & 0 & 1 & 0 & 0 & 1 \\
\end{tabular}
\end{ruledtabular}
\end{table}

\begin{align}
T(a)&=\{s_1\},
&
T(b)&=\{s_2,s_3\},
&
T(c)&=\{s_4\},
\nonumber\\
T(d)&=\{s_1,s_2\},
&
T(e)&=\{s_3\},
&
T(f)&=\{s_2,s_4\}.
\label{eq:trianglesupports}
\end{align}

Equivalently, by identifying each atom with the set of two-valued states in which it is assigned the value \(1\), the three contexts are represented by the partitions
\begin{equation}
\{\{s_1\},\{s_2,s_3\},\{s_4\}\},
\qquad
\{\{s_4\},\{s_1,s_2\},\{s_3\}\},
\qquad
\{\{s_3\},\{s_2,s_4\},\{s_1\}\}.
\end{equation}

Applying the translation scheme of the previous section yields the start rule
\begin{equation}
q_{\triangle}\to a\,b\,c\,d\,e\,f
\end{equation}
and the row rules
\begin{align}
a &\to s_1\,br\,s_2\,s_3\,s_4\,n,
\nonumber\\
b &\to s_2\,s_3\,br\,s_1\,s_4\,n,
\nonumber\\
c &\to s_4\,br\,s_1\,s_2\,s_3\,n,
\nonumber\\
d &\to s_1\,s_2\,br\,s_3\,s_4\,n,
\nonumber\\
e &\to s_3\,br\,s_1\,s_2\,s_4\,n,
\nonumber\\
f &\to s_2\,s_4\,br\,s_1\,s_3.
\label{eq:trianglegrammar}
\end{align}

Figure~\ref{fig:triangletableau} presents the triangle logic in the same three-part scheme used for the V-logic: panel~(a) shows the Greechie-style hypergraph, panel~(b) gives the corresponding atom-state incidence schema, and panel~(c) shows the grammar-oriented tile rendering. The resulting object may be regarded as a triangle-logic generalization of the \emph{Quantum Square}. It preserves the same basic design idea---rows indexed by atoms, recurring state symbols, and a separator between support and complement---while extending it from two contexts to a cyclic arrangement of three.

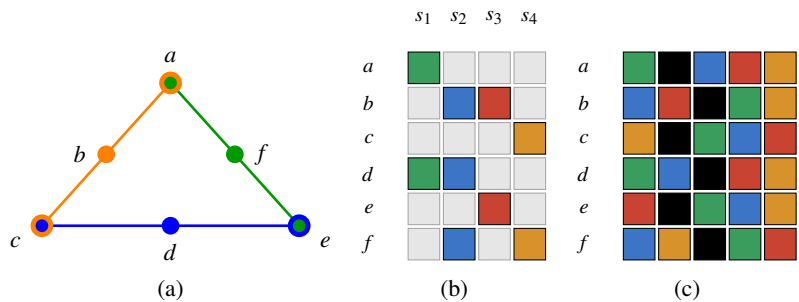
\begin{figure*}[t]
\centering
\setlength{\tabcolsep}{3pt}
\begin{tabular}{ccc}
\begin{tikzpicture}[scale=0.95, every path/.style={line width=1pt}]

\coordinate (A) at (1.8,2.0);
\coordinate (B) at (0.9,1.0);
\coordinate (C) at (0,0);
\coordinate (D) at (1.8,0);
\coordinate (E) at (3.6,0);
\coordinate (F) at (2.7,1.0);

\draw[orange] (A)--(B)--(C);
\draw[blue]   (C)--(D)--(E);
\draw[green!60!black] (E)--(F)--(A);

\filldraw[fill=orange, draw=orange] (A) circle (4pt);
\filldraw[fill=green!60!black, draw=green!60!black] (A) circle (2pt);
\node[yshift=10pt] at (A) {$a$};

\filldraw[fill=orange, draw=orange] (C) circle (4pt);
\filldraw[fill=blue, draw=blue] (C) circle (2pt);
\node[xshift=-10pt,yshift=-6pt] at (C) {$c$};

\filldraw[fill=blue, draw=blue] (E) circle (4pt);
\filldraw[fill=green!60!black, draw=green!60!black] (E) circle (2pt);
\node[xshift=10pt,yshift=-6pt] at (E) {$e$};

\filldraw[fill=orange, draw=orange] (B) circle (3pt);
\node[xshift=-10pt] at (B) {$b$};

\filldraw[fill=blue, draw=blue] (D) circle (3pt);
\node[yshift=-10pt] at (D) {$d$};

\filldraw[fill=green!60!black, draw=green!60!black] (F) circle (3pt);
\node[xshift=10pt] at (F) {$f$};

\end{tikzpicture}
&
\begin{tikzpicture}[scale=0.88, transform shape]
\matrix (u) [matrix of nodes,
  nodes in empty cells,
  row sep=1pt,
  column sep=1pt,
  ampersand replacement=\&] {
|[tileone]| \& |[tilefalse]| \& |[tilefalse]| \& |[tilefalse]| \\
|[tilefalse]| \& |[tiletwo]| \& |[tilethree]| \& |[tilefalse]| \\
|[tilefalse]| \& |[tilefalse]| \& |[tilefalse]| \& |[tilefour]| \\
|[tileone]| \& |[tiletwo]| \& |[tilefalse]| \& |[tilefalse]| \\
|[tilefalse]| \& |[tilefalse]| \& |[tilethree]| \& |[tilefalse]| \\
|[tilefalse]| \& |[tiletwo]| \& |[tilefalse]| \& |[tilefour]| \\
};

\node[left=4mm of u-1-1] {\small \(a\)};
\node[left=4mm of u-2-1] {\small \(b\)};
\node[left=4mm of u-3-1] {\small \(c\)};
\node[left=4mm of u-4-1] {\small \(d\)};
\node[left=4mm of u-5-1] {\small \(e\)};
\node[left=4mm of u-6-1] {\small \(f\)};

\node[above=3mm of u-1-1] {\small \(s_1\)};
\node[above=3mm of u-1-2] {\small \(s_2\)};
\node[above=3mm of u-1-3] {\small \(s_3\)};
\node[above=3mm of u-1-4] {\small \(s_4\)};
\end{tikzpicture}
&
\begin{tikzpicture}[scale=0.88, transform shape]
\matrix (t) [matrix of nodes,
  nodes in empty cells,
  row sep=1pt,
  column sep=1pt,
  ampersand replacement=\&] {
|[tileone]| \& |[tilesep]| \& |[tiletwo]| \& |[tilethree]| \& |[tilefour]| \\
|[tiletwo]| \& |[tilethree]| \& |[tilesep]| \& |[tileone]| \& |[tilefour]| \\
|[tilefour]| \& |[tilesep]| \& |[tileone]| \& |[tiletwo]| \& |[tilethree]| \\
|[tileone]| \& |[tiletwo]| \& |[tilesep]| \& |[tilethree]| \& |[tilefour]| \\
|[tilethree]| \& |[tilesep]| \& |[tileone]| \& |[tiletwo]| \& |[tilefour]| \\
|[tiletwo]| \& |[tilefour]| \& |[tilesep]| \& |[tileone]| \& |[tilethree]| \\
};

\node[left=4mm of t-1-1] {\small \(a\)};
\node[left=4mm of t-2-1] {\small \(b\)};
\node[left=4mm of t-3-1] {\small \(c\)};
\node[left=4mm of t-4-1] {\small \(d\)};
\node[left=4mm of t-5-1] {\small \(e\)};
\node[left=4mm of t-6-1] {\small \(f\)};
\end{tikzpicture}
\\
(a) & (b) & (c)
\end{tabular}
\caption{Triangle-logic generalization of the \emph{Quantum Square}. (a) Greechie-style depiction of the triangle logic associated with Eq.~\eqref{eq:trianglepartitions}. The intertwining atoms \(a\), \(c\), and \(e\) are shown by concentric circles because each belongs to two contexts. (b) Atom-state incidence schema. (c) Tile rendering obtained from the grammar in Eq.~\eqref{eq:trianglegrammar}, using four recurring state colors and a black separator between support and complement. Each row lists the states with value \(1\), then the separator, and then the states with value \(0\).}
\label{fig:triangletableau}
\end{figure*}

Compared to the V-logic, this example has fewer two-valued states but more contexts. As a result, the generated tableau uses a smaller color repertoire while exhibiting a more strongly cyclic distribution of that repertoire across rows. This makes the example useful both conceptually and aesthetically: conceptually, because it shows that the method extends beyond the V-shape to other finite logics; aesthetically, because the repeated re-entry of the same four symbols across six rows produces a denser pattern of recurrence and variation.

The triangle logic also illustrates a broader point. The generative method does not depend on one specific source structure, but only on the availability of a finite logical skeleton together with its supporting two-valued states. Once these are given, the same procedure of row construction and medium-specific rendering can be repeated. In this sense, the \emph{Quantum Square} is only the first member of a larger family of logic-generated artifacts.

\section{Prolog realizations}

For the sake of conceptual clarity, it is useful to separate in code the structural layer from the repertoire layer.

\subsection{V-logic code}

\listingtitle{lst:vlogiccomplete}{Complete V-logic code: structural layer, repertoire layer, and rendering layer. The colored squares are typeset here as LaTeX renderings of the palette used in Fig.~\ref{fig:vlogic}(c).}
\begin{codeblock}
% start rule:
v_logic --> a,b,c,d,e.

% structural layer:
a --> s1,s2,br,s3,s4,s5,n.
b --> s3,s4,br,s1,s2,s5,n.
c --> s5,br,s1,s2,s3,s4,n.
d --> s2,s4,br,s1,s3,s5,n.
e --> s1,s3,br,s2,s4,s5.

% repertoire layer:
s1 --> [ \GreenSq{} ].
s2 --> [ \BlueSq{} ].
s3 --> [ \RedSq{} ].
s4 --> [ \OrangeSq{} ].
s5 --> [ \VioletSq{} ].

br --> [ \BlackSq{} ].
n  --> [{\textbackslash}n].

% rendering layer:
output :-
    nl,nl,nl,
    phrase(v_logic, Ls),
    format("~s", [Ls]),
    nl,nl,nl,nl.
\end{codeblock}

The V-logic determines the combinatorial arrangement, whereas the choice of colored squares belongs to the repertoire layer. The same structural rows can be rendered with different palettes or in different media, such as color, typography, or sound.

In practical applications, the colored squares can be replaced by HTML fragments, SVG instructions, MIDI events, OSC messages, or calls to a graphics or sound engine. Thus the same logical skeleton can generate not only tile images but also textual works or sonic pieces.

\subsection{Triangle-logic code}

The same strategy applies to the triangle logic. The only structural change is the replacement of the five V-logic rows by six rows associated with the six atoms \(a,\dots,f\), together with a repertoire that uses only four state symbols.

\listingtitle{lst:trianglecode}{Triangle-logic code with the same layered organization.}
\begin{codeblock}
% start rule:
triangle_logic --> a,b,c,d,e,f.

% structural layer:
a --> s1,br,s2,s3,s4,n.
b --> s2,s3,br,s1,s4,n.
c --> s4,br,s1,s2,s3,n.
d --> s1,s2,br,s3,s4,n.
e --> s3,br,s1,s2,s4,n.
f --> s2,s4,br,s1,s3,n.

% repertoire layer:
s1 --> [ \GreenSq{} ].
s2 --> [ \BlueSq{} ].
s3 --> [ \RedSq{} ].
s4 --> [ \OrangeSq{} ].

br --> [ \BlackSq{} ].
n  --> [{\textbackslash}n].
\end{codeblock}

\section{Humanities- and design-oriented interpretation}

\subsection{Structure as score}

From a design perspective, the partition logic can be interpreted as a score. It is not yet the finished artifact, just as a musical score is not yet a performance. But it prescribes a disciplined arrangement of possibilities. The grammar derived from the logic functions as a notation of relations, recurrences, and separations.

This score-like character is important because it avoids two extremes. On the one hand, it is more rigid than unconstrained improvisation: the formal structure cannot simply be ignored. On the other hand, it is less rigid than a single finished image: many realizations remain possible. The logic therefore occupies a productive middle ground between rule and freedom.

\subsection{Structural template and sign repertoire}

This distinction opens a wide field of aesthetic exploration. Explorers can hold the structure constant while changing the repertoire, thereby observing what belongs to syntax and what belongs to materialization. Conversely, they can hold the repertoire constant while varying the logical source, thereby seeing how formal structure affects the character of the resulting artifact. This approach proves particularly insightful in the education of artists and designers, as it helps them distinguish structural principles from material realization while fostering creative experimentation.

The same V-logic may therefore become:
\begin{itemize}
\item a colored tile image,
\item a typographic grid,
\item a sequence of syllables,
\item a rhythmic or harmonic pattern,
\item or a data-driven animation.
\end{itemize}
This illustrates the transmedial potential of V-logic, which can render structural logic in visual, auditory, or temporal forms. For example, the logic can be made audible: a prototype developed by Christian Jendreiko in collaboration with the composer Klaus Roeder demonstrates how the partition structure can be translated into sound. Further work in this direction is being pursued by a research group led by Christian Jendreiko at HSD University of Applied Sciences in D\"usseldorf. The group includes Klaus Roeder, Bj\"orn Lellmann, and Robert Eisinger.

\subsection{From diagram to perceptual offering}

In science communication, it is often tempting to regard visualization as a transparent window onto a concept. Our approach suggests a more careful view. The generated object is not a transparent copy of the logic. It is a \emph{perceptual offering}: a structured artifact that affords certain acts of recognition, comparison, and reflection.

The value of such artifacts lies partly in their legibility and partly in their sensuous force. A purely formal matrix may be clearer analytically; a more aesthetic tile arrangement may be more memorable or engaging. These are not mutually exclusive aims, but they are distinct. The tension between them should be treated as a design problem rather than suppressed.

\subsection{Complementarity without overstatement}

Because the V-logic has separating two-valued states, it is an example of complementarity without full Kochen-Specker contextuality. This is a feature, not a weakness. It allows communicative work to proceed in stages. One can first make the notion of multiple contexts and their overlap intelligible. Only later need one move to stronger configurations without classical truth assignments.

In other words, the present method does not merely generate aesthetic artifacts; it can also scaffold conceptual learning. The generated object is a bridge between abstract structure and human apprehension.

\section{Discussion}

Several points deserve emphasis.

First, the construction is not limited to the V-logic. Any finite partition logic with separating two-valued states can be translated into an SGLG in the manner described above. Larger examples, including pentagon logics and more intricate pasted structures, should yield richer aesthetic and pedagogical possibilities \cite{wright,svozil-2001-eua}.

Second, not every interesting finite quantum structure is a partition logic. Some configurations important in quantum foundations have no separating set of two-valued states. For those, a different but related translation strategy would be needed, perhaps based directly on faithful orthogonal representations rather than state supports.

Third, the distinction between logical skeleton and repertoire is not merely a convenience of notation. It is what allows one and same source structure to appear across media and to function both as a formal object and as a generative score.

Fourth, the method invites empirical investigation. Which mappings are most understandable to non-specialist audiences? Which visual or sonic forms best communicate overlap of contexts? When does aesthetic richness enhance understanding, and when does it obscure it? These are open interdisciplinary research questions.

\section{Conclusion}

We have proposed and exemplified a concrete bridge between partition logic and Generative Logic. The bridge consists of a translation from finite partition logics with separating two-valued states into simple non-recursive grammars, together with a principled separation between formal skeleton, sign repertoire, and rendering layer.

Using the V-logic \(L_{12}\), we derived a modular visual artifact, the \emph{Quantum Square}, and showed multiple renderings of the same source structure. We also gave additional formal examples, including a horizontal sum of binary partitions and a triangle-logic generalization.

The broader significance of the method is double. For design and artistic research, it opens a disciplined way of using logical structures as generative scores. For science communication, it offers a way to externalize complementarity and finite event structure without collapsing them into informal metaphor. In both cases, the central insight is the same: abstract structures need not remain abstract. They can be transformed into perceptible forms while retaining their formal identity.

\begin{acknowledgments}
Christian Jendreiko would like to thank Fran\c{c}ois Fages, Thierry Marianne, and Guy Narboni for the invaluable insights gained during a collaborative working session in Paris.
This research was funded in whole or in part by the \textit{Austrian Science Fund (FWF)} [Grant \textit{DOI:10.55776/PIN5424624}].
The authors acknowledge TU Wien Bibliothek for financial support through its Open Access Funding Programme.
\end{acknowledgments}

\bibliography{svozil}

\begin{thebibliography}{12}%
\makeatletter
\providecommand \@ifxundefined [1]{%
 \@ifx{#1\undefined}
}%
\providecommand \@ifnum [1]{%
 \ifnum #1\expandafter \@firstoftwo
 \else \expandafter \@secondoftwo
 \fi
}%
\providecommand \@ifx [1]{%
 \ifx #1\expandafter \@firstoftwo
 \else \expandafter \@secondoftwo
 \fi
}%
\providecommand \natexlab [1]{#1}%
\providecommand \enquote  [1]{``#1''}%
\providecommand \bibnamefont  [1]{#1}%
\providecommand \bibfnamefont [1]{#1}%
\providecommand \citenamefont [1]{#1}%
\providecommand \href@noop [0]{\@secondoftwo}%
\providecommand \href [0]{\begingroup \@sanitize@url \@href}%
\providecommand \@href[1]{\@@startlink{#1}\@@href}%
\providecommand \@@href[1]{\endgroup#1\@@endlink}%
\providecommand \@sanitize@url [0]{\catcode `\\12\catcode `\$12\catcode
  `\&12\catcode `\#12\catcode `\^12\catcode `\_12\catcode `\%12\relax}%
\providecommand \@@startlink[1]{}%
\providecommand \@@endlink[0]{}%
\providecommand \url  [0]{\begingroup\@sanitize@url \@url }%
\providecommand \@url [1]{\endgroup\@href {#1}{\urlprefix }}%
\providecommand \urlprefix  [0]{URL }%
\providecommand \Eprint [0]{\href }%
\providecommand \doibase [0]{https://doi.org/}%
\providecommand \selectlanguage [0]{\@gobble}%
\providecommand \bibinfo  [0]{\@secondoftwo}%
\providecommand \bibfield  [0]{\@secondoftwo}%
\providecommand \translation [1]{[#1]}%
\providecommand \BibitemOpen [0]{}%
\providecommand \bibitemStop [0]{}%
\providecommand \bibitemNoStop [0]{.\EOS\space}%
\providecommand \EOS [0]{\spacefactor3000\relax}%
\providecommand \BibitemShut  [1]{\csname bibitem#1\endcsname}%
\let\auto@bib@innerbib\@empty
\bibitem [{\citenamefont {Svozil}(1993)}]{svozil-93}%
  \BibitemOpen
  \bibfield  {author} {\bibinfo {author} {\bibfnamefont {K.}~\bibnamefont
  {Svozil}},\ }\href {https://doi.org/10.1142/1524} {\emph {\bibinfo {title}
  {Randomness \& Undecidability in Physics}}}\ (\bibinfo  {publisher} {World
  Scientific},\ \bibinfo {address} {Singapore},\ \bibinfo {year}
  {1993})\BibitemShut {NoStop}%
\bibitem [{\citenamefont {Schaller}\ and\ \citenamefont
  {Svozil}(1994)}]{schaller-92}%
  \BibitemOpen
  \bibfield  {author} {\bibinfo {author} {\bibfnamefont {M.}~\bibnamefont
  {Schaller}}\ and\ \bibinfo {author} {\bibfnamefont {K.}~\bibnamefont
  {Svozil}},\ }\bibfield  {title} {\bibinfo {title} {Partition logics of
  automata},\ }\href {https://doi.org/10.1007/BF02727427} {\bibfield  {journal}
  {\bibinfo  {journal} {Il Nuovo Cimento B}\ }\textbf {\bibinfo {volume}
  {109}},\ \bibinfo {pages} {167} (\bibinfo {year} {1994})}\BibitemShut
  {NoStop}%
\bibitem [{\citenamefont {Dvure{\v{c}}enskij}\ \emph
  {et~al.}(1995)\citenamefont {Dvure{\v{c}}enskij}, \citenamefont
  {Pulmannov{\'{a}}},\ and\ \citenamefont {Svozil}}]{dvur-pul-svo}%
  \BibitemOpen
  \bibfield  {author} {\bibinfo {author} {\bibfnamefont {A.}~\bibnamefont
  {Dvure{\v{c}}enskij}}, \bibinfo {author} {\bibfnamefont {S.}~\bibnamefont
  {Pulmannov{\'{a}}}},\ and\ \bibinfo {author} {\bibfnamefont {K.}~\bibnamefont
  {Svozil}},\ }\bibfield  {title} {\bibinfo {title} {Partition logics,
  orthoalgebras and automata},\ }\href {https://doi.org/10.5169/seals-116747}
  {\bibfield  {journal} {\bibinfo  {journal} {Helvetica Physica Acta}\ }\textbf
  {\bibinfo {volume} {68}},\ \bibinfo {pages} {407} (\bibinfo {year} {1995})},\
  \Eprint {https://arxiv.org/abs/arXiv:1806.04271} {arXiv:1806.04271}
  \BibitemShut {NoStop}%
\bibitem [{\citenamefont {Svozil}(1998)}]{svozil-ql}%
  \BibitemOpen
  \bibfield  {author} {\bibinfo {author} {\bibfnamefont {K.}~\bibnamefont
  {Svozil}},\ }\href {https://www.springer.com/gp/book/9789814021074} {\emph
  {\bibinfo {title} {Quantum Logic}}},\ Discrete Mathematics and Theoretical
  Computer Science\ (\bibinfo  {publisher} {Springer},\ \bibinfo {address}
  {Singapore},\ \bibinfo {year} {1998})\BibitemShut {NoStop}%
\bibitem [{\citenamefont {Svozil}(2005)}]{svozil-2001-eua}%
  \BibitemOpen
  \bibfield  {author} {\bibinfo {author} {\bibfnamefont {K.}~\bibnamefont
  {Svozil}},\ }\bibfield  {title} {\bibinfo {title} {Logical equivalence
  between generalized urn models and finite automata},\ }\href
  {https://doi.org/10.1007/s10773-005-7052-0} {\bibfield  {journal} {\bibinfo
  {journal} {International Journal of Theoretical Physics}\ }\textbf {\bibinfo
  {volume} {44}},\ \bibinfo {pages} {745} (\bibinfo {year} {2005})},\ \Eprint
  {https://arxiv.org/abs/arXiv:quant-ph/0209136} {arXiv:quant-ph/0209136}
  \BibitemShut {NoStop}%
\bibitem [{\citenamefont {Jendreiko}(2024)}]{Jendreiko2024}%
  \BibitemOpen
  \bibfield  {author} {\bibinfo {author} {\bibfnamefont {C.}~\bibnamefont
  {Jendreiko}},\ }\bibfield  {title} {\bibinfo {title} {Generative logic:
  Teaching {Prolog} as generative {AI} in art and design},\ }in\ \href
  {https://ceur-ws.org/Vol-3799/paper9PEG2.0.pdf} {\emph {\bibinfo {booktitle}
  {Workshop Proceedings of the 40th International Conference on Logic
  Programming (ICLP-WS 2024)}}},\ \bibinfo {series} {CEUR Workshop
  Proceedings}, Vol.\ \bibinfo {volume} {3799},\ \bibinfo {editor} {edited by\
  \bibinfo {editor} {\bibfnamefont {J.}~\bibnamefont {Arias}} \emph {et~al.}}\
  (\bibinfo  {publisher} {CEUR-WS.org},\ \bibinfo {address} {Dallas, TX},\
  \bibinfo {year} {2024})\BibitemShut {NoStop}%
\bibitem [{\citenamefont {Jendreiko}(2025)}]{Jendreiko2025}%
  \BibitemOpen
  \bibfield  {author} {\bibinfo {author} {\bibfnamefont {C.}~\bibnamefont
  {Jendreiko}},\ }\bibfield  {title} {\bibinfo {title} {The simple generative
  logic grammar: A tool for teaching logical thinking through visual research
  in art and design},\ }in\ \href
  {https://ceur-ws.org/Vol-4117/PEG-Regular-5.pdf} {\emph {\bibinfo {booktitle}
  {Joint Proceedings of the Workshops and Doctoral Consortium of the 41st
  International Conference on Logic Programming (ICLP-WS-DC 2025)}}},\ \bibinfo
  {series} {CEUR Workshop Proceedings}, Vol.\ \bibinfo {volume} {4117},\
  \bibinfo {editor} {edited by\ \bibinfo {editor} {\bibfnamefont
  {D.}~\bibnamefont {Azzolini}} \emph {et~al.}}\ (\bibinfo  {publisher}
  {CEUR-WS.org},\ \bibinfo {address} {Rende},\ \bibinfo {year}
  {2025})\BibitemShut {NoStop}%
\bibitem [{\citenamefont {Greechie}(1971)}]{greechie:71}%
  \BibitemOpen
  \bibfield  {author} {\bibinfo {author} {\bibfnamefont {R.~J.}\ \bibnamefont
  {Greechie}},\ }\bibfield  {title} {\bibinfo {title} {Orthomodular lattices
  admitting no states},\ }\href {https://doi.org/10.1016/0097-3165(71)90015-X}
  {\bibfield  {journal} {\bibinfo  {journal} {Journal of Combinatorial Theory.
  {S}eries {A}}\ }\textbf {\bibinfo {volume} {10}},\ \bibinfo {pages} {119}
  (\bibinfo {year} {1971})}\BibitemShut {NoStop}%
\bibitem [{\citenamefont {Kochen}\ and\ \citenamefont
  {Specker}(1967)}]{kochen1}%
  \BibitemOpen
  \bibfield  {author} {\bibinfo {author} {\bibfnamefont {S.}~\bibnamefont
  {Kochen}}\ and\ \bibinfo {author} {\bibfnamefont {E.~P.}\ \bibnamefont
  {Specker}},\ }\bibfield  {title} {\bibinfo {title} {The problem of hidden
  variables in quantum mechanics},\ }\href
  {https://doi.org/10.1512/iumj.1968.17.17004} {\bibfield  {journal} {\bibinfo
  {journal} {Journal of Mathematics and Mechanics (now Indiana University
  Mathematics Journal)}\ }\textbf {\bibinfo {volume} {17}},\ \bibinfo {pages}
  {59} (\bibinfo {year} {1967})}\BibitemShut {NoStop}%
\bibitem [{\citenamefont {Lov\'asz}(1979)}]{lovasz-79}%
  \BibitemOpen
  \bibfield  {author} {\bibinfo {author} {\bibfnamefont {L.}~\bibnamefont
  {Lov\'asz}},\ }\bibfield  {title} {\bibinfo {title} {On the {S}hannon
  capacity of a graph},\ }\href {https://doi.org/10.1109/TIT.1979.1055985}
  {\bibfield  {journal} {\bibinfo  {journal} {IEEE Transactions on Information
  Theory}\ }\textbf {\bibinfo {volume} {25}},\ \bibinfo {pages} {1} (\bibinfo
  {year} {1979})}\BibitemShut {NoStop}%
\bibitem [{\citenamefont {Gr{\"o}tschel}\ \emph {et~al.}(1986)\citenamefont
  {Gr{\"o}tschel}, \citenamefont {Lov{\'a}sz},\ and\ \citenamefont
  {Schrijver}}]{GroetschelLovaszSchrijver1986}%
  \BibitemOpen
  \bibfield  {author} {\bibinfo {author} {\bibfnamefont {M.}~\bibnamefont
  {Gr{\"o}tschel}}, \bibinfo {author} {\bibfnamefont {L.}~\bibnamefont
  {Lov{\'a}sz}},\ and\ \bibinfo {author} {\bibfnamefont {A.}~\bibnamefont
  {Schrijver}},\ }\bibfield  {title} {\bibinfo {title} {Relaxations of vertex
  packing},\ }\href {https://doi.org/10.1016/0095-8956(86)90087-0} {\bibfield
  {journal} {\bibinfo  {journal} {Journal of Combinatorial Theory, Series B}\
  }\textbf {\bibinfo {volume} {40}},\ \bibinfo {pages} {330} (\bibinfo {year}
  {1986})}\BibitemShut {NoStop}%
\bibitem [{\citenamefont {Wright}(1990)}]{wright}%
  \BibitemOpen
  \bibfield  {author} {\bibinfo {author} {\bibfnamefont {R.}~\bibnamefont
  {Wright}},\ }\bibfield  {title} {\bibinfo {title} {Generalized urn models},\
  }\href {https://doi.org/10.1007/BF01889696} {\bibfield  {journal} {\bibinfo
  {journal} {Foundations of Physics}\ }\textbf {\bibinfo {volume} {20}},\
  \bibinfo {pages} {881} (\bibinfo {year} {1990})}\BibitemShut {NoStop}%
\end{thebibliography}%

\end{document}